\documentclass[runningheads]{llncs}

\usepackage[utf8]{inputenc}
\usepackage[ruled,linesnumbered]{algorithm2e}
\usepackage{amssymb, amsfonts, amsmath}
\usepackage{stmaryrd}
\usepackage[inline]{enumitem}
\usepackage{tikz}
\usetikzlibrary{automata, positioning, shapes.multipart}
\usepackage{float}
\usetikzlibrary{automata, positioning, arrows.meta}
\newcommand{\semantics}[1]{{\llbracket #1 \rrbracket}}

\newcommand{\R}{\mathbb{R}}
\newcommand{\Stoch}{\mathcal{S}}

\begin{document}

\title{From Transformers to Weighted Automata: Towards the Verification of Large Language Models}
\titlerunning{From Transformers to Weighted Automata}

\author{
    Smayan Agarwal\inst{1} \and
    Aslah Ahmad Faizi\inst{2} \and
    Shobhit Singh\inst{1} \and
    Aalok Thakkar\inst{1}
}

\institute{
    Ashoka University, India\\
    \email{$\{$smayan.agarwal\_ug25, shobhit.singh, aalok.thakkar$\}$@ashoka.edu.in}
    \and
    Independent Researcher\\
    \email{aslahahmadfaizi@gmail.com}
}

\authorrunning{S. Agarwal et al.}

\maketitle

\begin{abstract}
Large language models (LLMs) are increasingly deployed in safety-critical settings, yet their black-box nature makes it difficult to provide formal guaranties about their behavior. Existing verification approaches rely primarily on empirical probing and testing, leaving open the question of how to reason rigorously about general-purpose transformer architectures.

In this work, we establish a principled bridge between transformers and weighted automata, a classical model from formal language theory. This connection enables us to transfer verification tools from automata theory to the analysis of LLMs. Our contributions are twofold: First, we develop a formal correspondence between transformer architectures and weighted automata over reals, showing how distributional properties of LLMs can be captured within this framework. Second, we introduce an identity testing algorithm for weighted automata that provides a statistical method for distinguishing whether two stochastic models define the same distribution up to a tolerance threshold.

This work provides the first formal bridge between modern neural sequence models and classical automata theory, clarifying both the potential and the computational challenges for rigorous LLM verification.

\keywords{Large Language Models, Transformers, Weighted\\ Automata, Verification, Identity Testing, Formal Methods}
\end{abstract}

\section{Introduction}
\label{sec:intro}

The success of transformer-based language models has created an urgent need for principled methods to understand and verify their behaviour. As these models are deployed in safety-critical applications, ranging from medical diagnosis to legal reasoning, the black-box nature of their decision-making processes poses fundamental challenges for certification and reliability guarantees.

Theoretical results from formal language theory offer an alternative approach. By establishing systematic correspondences between transformer variants and well-characterized computational models, we can leverage automata theory, circuit complexity, and descriptive logic to obtain predictive accounts of capabilities and limitations for specific architectural choices.

\subsection{The Gap Between Benchmarks and True Capabilities}

Large Language Models (LLMs) perform well on benchmarks, and yet fail on general tasks. Recent work demonstrates that even state-of-the-art models collapse when faced with simple inductive reasoning problems \cite{Bowen2024,Hua2025}. Furthermore, these models often rely on superficial neighbour-based heuristics rather than generalization \cite{Li2024}.

Current evaluation practices systematically inflate performance estimates while obscuring true model capabilities. A critical issue is \emph{benchmark data contamination}, where test items appear in training corpora. Studies show that GPT-4 achieves over 50\% accuracy on contaminated MMLU items, fundamentally violating the train–test split principle \cite{Deng2023,Xu2024}. This contamination operates alongside preference leakage, where evaluator LLMs exhibit systematic bias toward models sharing training data or architectural relationships, creating circular evaluation systems that measure memorization rather than understanding.

Recent frameworks expose these limitations. Cohen-Inger et al.'s C-BOD analysis demonstrates that high-performing models rely on ``superficial memorization of benchmark cues'' \cite{CohenInger2025}. Similarly, Shojaee et al.'s examination of Large Reasoning Models reveals ``complete accuracy collapse beyond certain complexities'' and ``counterintuitive scaling limits'' where reasoning effort paradoxically decreases on challenging problems despite adequate computational resources \cite{Shojaee2025}. These highlight a need for formal guarantees.

\subsection{The Verification Challenge}

Formal verification of neural models faces a fundamental computational barrier. 
For a model 
$M:\Sigma^\ast \to \mathcal{Y}$, and a property $R$, verification requires establishing the universal claim $\forall w \in D: R(M(w)) = 1$. This is only tractable when the input domain $D$ admits symbolic specification (e.g., as a regular language or logical constraint). When $D$ is defined extensionally by length bounds $D = \Sigma^{\leq n}$, the domain size becomes $\Theta(|\Sigma|^n)$. 
Any sound black-box verifier then faces a worst-case $\Omega(|D|)$ query lower bound, making universal verification computationally infeasible for even modest $n$.

Practical verification therefore requires structured domains $D$ that admit finite abstraction through techniques like product constructions or model checking, rather than exhaustive enumeration of all token sequences up to length $n$.

\subsection{Why Global Constraints Are Insufficient}

While enforcing global constraints, such as $K$-Lipschitz bounds, appears attractive for certification, this approach fundamentally mismatches capability requirements. Lipschitz constraints depend on continuous embedding metrics rather than discrete string functions, compound pessimistically across layers, and force low-contrast attention that prevents the sharp routing needed for algorithmic behaviors. Empirically, global Lipschitz training trades accuracy for minimal certified radii, over-regularizing the selective attention mechanisms that give Transformers their power while providing certificates too weak to justify the capability loss \cite{huster2019limitations,zhang2022rethinking,ducotterd2024improving,verine2022expressivity,delattre2023lvm,khromov2024fundamental,qi2023lipsformer,newhouse2025training}.

\subsection{Automata-Theoretic Foundations}

Recent theoretical advances have begun mapping transformer expressivity to classical computational models. 
Notably, Angluin et al. \cite{angluin2023starfree} proved that rightmost hard-attention encoders with strict causal masking recognize exactly the star-free languages (first-order logic over strings). Adding periodic positional predicates upgrades this class to FO[MOD], while arbitrary monadic predicates yield FO[Mon]. These precise equivalences reveal that transformers align with well-studied fragments of logic and automata theory.
Similarly, architectural variants map naturally to circuit complexity classes: unique and averaging hard-attention encoders correspond to $AC^0$ and $TC^0$ respectively \cite{hao2022formal,strobl2023survey}. Such characterizations suggest that transformer expressivity analysis can be grounded in established frameworks rather than relying solely on empirical exploration.

\subsection{Weighted Finite Automata as an Abstraction Layer}

While the above results apply to restricted transformer variants, modern LLMs use soft attention mechanisms that compute distributions over sequences. This motivates our focus on Weighted Finite Automata (WFA) over reals as a natural model for stochastic sequence generation that generalizes both classical finite automata and hidden Markov models.
WFA provide several advantages for LLM analysis:

\begin{enumerate}
    \item Compositional structure: Unlike neural networks, automata admit modular analysis through product constructions and closure properties.
    \item Algorithmic foundation: Classical algorithms exist for equivalence testing, minimization, and property verification.
    \item Statistical learning theory: Spectral methods enable provable recovery of automata from sample data \cite{balle2012spectral,hsu2012spectral}.
\end{enumerate}

\subsection{Technical Approach and Contributions}

The key two insights of our work are reflected in the two-step reduction: 

\paragraph{Step 1: Transformer to WFA.} We prove that the self-attention mechanism in decoder-only transformers can be exactly computed by a structured RNN that maintains running accumulators for the attention normalization. We then apply spectral learning to the RNN's induced function, using Hankel matrix analysis to extract a finite-state weighted automaton approximation. We discuss this in Section~\ref{sec:approx}.

\paragraph{Step 2: Identity Testing of WFA.} We develop the first identity testing procedure for rational stochastic languages on infinite domains. This technique can be applied to the WFA obtained in Step 1 to provide formal guarantees about a restricted transformer architecture. We discuss this in Section~\ref{sec:testing}.

Through these, our work makes two primary contributions toward verification and explainability of LLMs:
\begin{enumerate}
    \item \textbf{Formal reduction theorem:} We prove that decoder-only transformers are equivalent to structured RNNs, which can be approximated by WFA with controlled error bounds.
    \item \textbf{A black-box algorithm for identity testing.} We design an algorithm that, given only sample access to an unknown distribution over strings (e.g., LLM outputs), tests whether this distribution is $\varepsilon$-close to some WA with confidence $1-\delta$. This test serves as a \emph{certificate of abstraction}: whenever it accepts, one can meaningfully reason about the black-box model via the corresponding WA. This enables (a) verification of safety and correctness properties, (b) improved interpretability through automaton-based explanations of sequential behavior, and (c) applicability to proprietary black-box models where internal parameters are inaccessible.
\end{enumerate}

\section{Notation and Preliminaries}
\label{sec:notation}

This section establishes the notation and foundational concepts used throughout the paper. We assume familiarity with basic formal language theory, linear algebra, and probability theory.

\subsection{Sets, Alphabets, and Strings}

Let $\Sigma$ denote a finite alphabet. In the context of LLMs, the alphabet $\Sigma$ typically represents the model's \textit{vocabulary} or \textit{tokenizer}, where each symbol $\sigma \in \Sigma$ corresponds to a unique token. The set of all finite strings over $\Sigma$ is denoted $\Sigma^*$, which includes the empty string $\epsilon$. The length of a string $x$ is denoted $|x|$. The set of all strings of length at most $\theta$ is denoted $\Sigma^{\leq \theta}$.

\subsection{Distributions and Distances}

A distribution $P$ over a countable set $\mathcal{X}$ (in our case, $\mathcal{X} = \Sigma^*$) is a function $P: \mathcal{X} \to [0, 1]$ such that $\sum_{x \in \mathcal{X}} P(x) = 1$. We denote the set of all distributions over $\Sigma^*$ by $\Stoch(\Sigma^*)$, and call them stochastic languages.

The primary distance measure between two distributions $P$ and $Q$ is the $\ell_1$ distance (which is twice the total variation distance):
\[
\|P - Q\|_1 = \sum_{x \in \Sigma^*} |P(x) - Q(x)|.
\]
For a distribution $P$ and a set $S \subseteq \Sigma^*$, we write $P(S) = \sum_{x \in S} P(x)$.

\subsection{Weighted Finite Automata and Rational Stochastic Languages}
\begin{definition}[Weighted Finite Automaton (WFA)]\label{def:weighted}
Let $\Sigma$ be a finite alphabet. For a given semiring $(K,+,\cdot,0,1)$, a weighted finite automaton is a tuple $\mathcal{A} = (\Sigma, Q, \lambda, \mu, \rho)$,
where $Q$ is a finite set of states, $\lambda,\mu \in K^Q$ are the initial and final weight functions, and $\rho \colon Q \times \Sigma \times Q \to K$ is the transition weight function.  The semantics of $\mathcal{A}$ is the function $f_\mathcal{A} \colon \Sigma^* \to K$ given by
\[
f_\mathcal{A}(w) \;=\; \sum_{q_0,\dots,q_n \in Q}
  \lambda(q_0) \cdot \rho(q_0,w_1,q_1) \cdots \rho(q_{n-1},w_n,q_n) \cdot \mu(q_n),
\]
for $w = w_1 \cdots w_n \in \Sigma^*$. 
\end{definition}

\begin{figure}[t]
    \centering
    \begin{tikzpicture}[->,auto,node distance=3cm,accepting/.style=accepting by arrow]
        % States
        \node[state,initial, initial where=above,accepting,accepting text={$\tfrac{1}{4}$},accepting where=below] (q1) {$q_1$};
        \node[state] (q2) [right=of q1] {$q_2$};

        % Transitions
        \path (q1) edge [loop left] node {$a : \tfrac{1}{4}$} ()
        (q1) edge [bend left] node {$a : \tfrac{1}{4}; b : \tfrac{1}{4}$} (q2)
              (q2) edge [loop right] node {$b : \tfrac{1}{3}$} ()
              (q2) edge [bend left] node {$a : \tfrac{2}{3}$} (q1);
    \end{tikzpicture}
    \caption{A weighted finite automaton over the alphabet $\{a, b\}$. 
    State $q_1$ is both initial and accepting with weight $1$. 
    The transition labels indicate the symbol and the weight (probability) associated with each move.}
    \label{fig:notation:weighted-automata}
\end{figure}

\begin{definition}[Rational Stochastic Language]\label{def:stochastic-rational}
The class of rational stochastic languages over alphabet $\Sigma$ and field $\R$, denoted $\Stoch_{\R}^{rat}(\Sigma)$, consists of all functions $f \colon \Sigma^+ \to \R$ such that:

\begin{enumerate}
    \item $f = \semantics{\mathcal{A}}$ for some WFA $\mathcal{A}$ over the semiring $(\R, +, \cdot, 0, 1)$
    \item $f$ is a stochastic language 
\end{enumerate}

The restricted class $\Stoch_{\R^+}^{rat}(\Sigma)$ considers only automata with non-negative weights. When the semiring is clear from context, we write simply $\Stoch^{rat}(\Sigma)$.
\end{definition}

Figure~\ref{fig:notation:weighted-automata} shows a WFA over the alphabet $\{a, b\}$. It induces a rational stochastic language over $\{a, b\}^*$. For a detailed discussion on WFA and stochastic languages, see \cite{bollig2015weighted}. 

\subsection{Recurrent Neural Networks}

\begin{definition}[Recurrent Neural Network (RNN)]
A \emph{recurrent neural network (RNN)} is a triple \(R=(\alpha, g, \beta)\) where:
\begin{enumerate*}[label=(\roman*)]
    \item \(\alpha \in \mathbb{R}^d\) is the initial state vector,
    \item \(g: \mathbb{R}^d \times \Sigma \to \mathbb{R}^d\) is the recurrent transition function,
    \item \(\beta: \mathbb{R}^d \to \mathbb{R}\) is the output function.
\end{enumerate*}
The RNN processes a sequence \(w = w_1 \dots w_n \in \Sigma^*\) by producing a sequence of hidden states:
\[
\begin{aligned}
h_0 &= \alpha, \\
h_i &= g(h_{i-1}, w_i) \quad \text{for } i = 1, \dots, n.
\end{aligned}
\]
The output of the RNN for the sequence is \(f_R(w) = \beta(h_n)\). \cite{merrill2020formal}
\end{definition}

Intuitively, an RNN maintains a hidden state \(h_i\) that serves as a compressed memory of all past inputs up to position \(i\). 
At each step, the transition function \(g\) updates this state based on the previous hidden state and the current input symbol.
The initial vector \(\alpha\) represents the network’s starting memory, and the output function \(\beta\) maps the final hidden state to a scalar value or prediction. 
In standard neural implementations, \(g\) and \(\beta\) are parameterized by weight matrices and nonlinearities such as \(\tanh\) or \(\mathrm{ReLU}\), 
allowing the RNN to approximate complex sequence-to-value mappings. 

\subsection{The Spectral Learning Framework}

Spectral learning offers a method to learn a finite state representation from i.i.d. samples \cite{hsu2012spectral,balle2012spectral}.
The \textit{Hankel matrix} $H_f \in \mathbb{R}^{\Sigma^* \times \Sigma^*}$ of a function $f: \Sigma^* \to \mathbb{R}$ is a bi-infinite matrix whose entries are defined for any $(p, s) \in \Sigma^* \times \Sigma^*$ by $H_f[p, s] = f(ps)$, where $ps$ denotes the concatenation of $p$ and $s$. 

A fundamental result \cite{schutzenberger1961definition} states that $f$ is rational (i.e., computable by a WFA of $d$ states) if and only if its Hankel matrix $H_f$ has finite rank $d$. We discuss this in further detail in Section~\ref{sec:approx:spectral}. 

\subsection{Transformer Attention}

Transformers \cite{vaswani2017} are the popular LLM design architecture. At a high level, they operate over sequences of discrete symbols from
a finite alphabet~$\Sigma$, which are mapped to sequences of vectors via an embedding layer. Each transformer layer is a length-preserving function on such vector sequences, combining \emph{self-attention}---which retrieves and mixes information from a variable-length history---and \emph{feed-forward networks} applied position-wise. Stacks of these layers, together with input and output projections, yield a powerful sequence model. 

For theoretical analysis, it is crucial to treat transformers as operating on inputs of \emph{unbounded length}, rather than fixing a maximum context window. This allows us to compare them with formal language models such as automata, circuits, and logics, and to ask: \emph{what languages can transformers recognize or generate?} Variants of the architecture (encoder, decoder, encoder--decoder), choices of attention mechanism (softmax, hard), and restrictions on precision or positional encodings all impact expressivity~\cite{strobl2023survey}.

\begin{definition}[Decoder-Only Transformer]
\label{def:transformer}
Let \(\Sigma\) be a finite alphabet and \(d \in \mathbb{N}\).  
A one-layer decoder-only Transformer is a tuple
\[
\mathcal{T} = (\alpha, W_Q, W_K, W_V, \beta),
\]
where
\begin{itemize}
    \item \(\alpha : \Sigma \to \mathbb{R}^d\) is the embedding map assigning each symbol an embedding vector;
    \item \(W_Q, W_K, W_V \in \mathbb{R}^{d \times d}\) are the learned query, key, and value projection matrices;
    \item \(\beta:\mathbb{R}^d \to \mathbb{R}\) is the output map.
\end{itemize}

Given a sequence \(w = \sigma_1 \cdots \sigma_n \in \Sigma^*\), the model first maps it to embeddings \(e_i = \alpha(\sigma_i)\).  
At each position \(n\), the layer computes a \emph{query} vector \(q_n = W_Q e_n\) for the current symbol, as well as \emph{key} and \emph{value} vectors \(k_i = W_K e_i\), \(v_i = W_V e_i\) for all previous symbols.  
The hidden state at position \(n\) is then obtained by attending to the past:
\[
h_n = \frac{\sum_{i=1}^n v_i \, e^{q_n^\top k_i}}{\sum_{j=1}^n e^{q_n^\top k_j}},
\]
that is, a weighted sum of the values \(v_i\), where the weights depend on how similar the current query \(q_n\) is to each past key \(k_i\).  
Finally, the model output on \(w\) is
\[
f_{\mathcal{T}}(w) := \beta(h_n).
\]
\end{definition}

Intuitively, a decoder-only Transformer processes a sequence one token at a time, much like an RNN, but instead of maintaining a single recurrent hidden state, 
it computes a new hidden representation \(h_n\) for the current token by looking at all previous tokens simultaneously through the \emph{attention mechanism}. 
This mechanism allows Transformers to model long-range dependencies more effectively than RNNs, because each token can directly attend to any previous token without being bottlenecked by a single hidden state.

\section{From Transformers to Weighted Finite Automata}
\label{sec:approx}

\begin{figure}[t]
\centering
\resizebox{\textwidth}{!}{
\begin{tikzpicture}[node distance=2.5cm,>=Latex, thick]
  \node[draw, rectangle, rounded corners, align=center, fill=blue!10] (T) 
    {Decoder-Only\\ Transformer ($\mathcal{T}$)};
    
  \node[draw, rectangle, rounded corners, align=center, fill=orange!10, right=of T] (RNN) 
    {Structured\\ RNN ($\mathcal{R}$)};
    
  \node[draw, rectangle, rounded corners, align=center, fill=green!10, right=of RNN] (WA) 
    {Weighted Finite\\ Automaton ($\mathcal{A}$)};
    
  \draw[->, thick] 
  (T) -- (RNN) 
  node[above, font=\small, pos=0.5] {Formal} 
  node[below, font=\small, pos=0.5] {Reduction};

  \draw[->, thick] 
  (RNN) -- (WA)
  node[above, font=\small, pos=0.5] {Spectral} 
  node[below, font=\small, pos=0.5] {Learning};
  
  \node[below=0.6cm of T, font=\footnotesize, align=center] (AT1) 
    {$(\mathcal{T}) \equiv (\mathcal{R})$ via \\ softmax recurrence};
    
  \node[below=0.6cm of RNN, font=\footnotesize, align=center] (AT2) 
    {Hankel matrix $\rightarrow$ SVD};
    
  \node[below=0.6cm of WA, font=\footnotesize, align=center] (AT3) 
    {Spectral learning yields WFA};
\end{tikzpicture}
}
\caption{Pipeline for reducing a decoder-only Transformer $\mathcal{T}$ to a weighted automaton $\mathcal{A}$. 
Step~1: establish equivalence with a structured RNN $\mathcal{R}$ (Sec.~\ref{sec:approx:equivalence}). 
Step~2: recover $\mathcal{A}$ from $\mathcal{R}$ via spectral learning on truncated Hankel matrices (Sec.~\ref{sec:approx:spectral}).}
\label{fig:transformer-rnn-wa}
\end{figure}

We reduce a one-layer decoder-only transformer to a WFA via an intermediate RNN.  
Figure~\ref{fig:transformer-rnn-wa} illustrates the pipeline: the transformer $\mathcal{T}$ is first unfolded into an equivalent structured RNN $\mathcal{R}$; then, $\mathcal{R}$ is approximated by a WFA $\mathcal{A}$ through spectral learning.

\subsection{Decoder-Only Transformer as a Recurrent Model}
\label{sec:approx:equivalence}
We now demonstrate an interesting viewpoint: that the self-attention mechanism in transformers can be computed by a carefully constructed RNN \cite{zhang2024}. This equivalence provides a fascinating perspective on the computational relationship between these two fundamental architectures.\\

\paragraph{Construction.}
We construct a specific RNN that computes the softmax self-attention mechanism of a decoder-only transformer at position $n$. Let \(\sigma(x) = (1+e^{-x})^{-1}\) denote the sigmoid function.

For the query vector \(q_n \in \mathbb{R}^d\) and a sequence of key-value pairs \(\{(k_i, v_i)\}_{i=1}^n\), the goal of the RNN is to compute the value accumulator: 

\begin{equation}
\label{eq:approx:accumulator}
f_n = \frac{\sum_{j=1}^n v_j e^{q_n^\top k_j}}{\sum_{j=1}^n e^{q_n^\top k_j}}.
\end{equation}

We define an RNN whose state \(h_i \in \mathbb{R}^{d+1}\) is a composite vector containing two components:
\begin{enumerate}
    \item \(g_i \in \mathbb{R}\): A scalar for the running log-partition function (\(\log \sum_{j=1}^i e^{q_n^\top k_j}\)).
    \item \(f_i \in \mathbb{R}^d\): A vector for the running weighted sum of values (\(\sum_{j=1}^i v_j e^{q_n^\top k_j}\)).
\end{enumerate}
We pack these into a single state vector: $h_i = (g_i, f_i)$.

The initial state, transition function, and output function for this RNN are defined as follows:
\begin{align*}
\alpha &= h_0 = (q_n^\top k_1, v_1), \\
g(h_{i-1}, w_i) &=
\left(
\begin{array}{c}
\log(e^{g_{i-1}} + e^{q_n^\top k_i}) \\
v_i \cdot \sigma(q_n^\top k_i - g_{i-1}) + f_{i-1} \cdot \sigma(g_{i-1} - q_n^\top k_i)
\end{array}
\right), \\
\beta(h_n) &= \frac{f_n}{e^{g_n}}.
\end{align*}
Here, the input \(w_i\) to the RNN at step \(i\) is used to retrieve the corresponding key-value pair \((k_i, v_i)\). The output function \(\beta\) performs the final normalization.

\begin{theorem}[Equivalence of Attention and RNN]
\label{thm:equivalence}
For any input sequence \((e_1,\dots,e_n)\) with associated keys and values, the output of the self-attention mechanism at position \(n\) (Eq.~\ref{eq:approx:accumulator}) is equal to the output of the constructed RNN:
\[
h_n^{\text{attention}} = \beta(g^*(\alpha, (k_1v_1, \dots, k_n v_n))).
\]
\end{theorem}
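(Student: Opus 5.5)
The proof goes by induction on the number of processed positions, using an explicit invariant for the two state components. Fix the query $q_n$ and write $s_j := q_n^\top k_j$ for the attention scores and $Z_i := \sum_{j=1}^i e^{s_j}$ for the partial partition functions. The state $h_1=(g_1,f_1)$ is the given initial state $\alpha = (s_1, v_1)$, which already accounts for position~$1$. The transition $g(\cdot,w_i)$ is then applied for $i = 2,\dots,n$, so that $g^*$ in the statement is the iterated transition over positions $2,\dots,n$ (equivalently, one may set $g_0=-\infty$ and start from $i=1$). The invariant I would prove is
\[
g_i = \log Z_i, \qquad f_i = \frac{\sum_{j=1}^i v_j e^{s_j}}{Z_i}.
\]
It holds for $i=1$ because $g_1 = s_1 = \log e^{s_1}$ and $f_1 = v_1 = v_1 e^{s_1}/e^{s_1}$.

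For the inductive step, assume the invariant at $i-1$. The first coordinate gives $g_i = \log(e^{g_{i-1}} + e^{s_i}) = \log(Z_{i-1}+e^{s_i}) = \log Z_i$ directly. For the second coordinate, the key identity is
\[
\sigma(a-b) = \frac{e^{a}}{e^{a}+e^{b}},
\]
which gives $\sigma(s_i - g_{i-1}) = e^{s_i}/Z_i$ and $\sigma(g_{i-1}-s_i) = Z_{i-1}/Z_i$. Substituting the hypothesis $f_{i-1} = \big(\sum_{j<i} v_j e^{s_j}\big)/Z_{i-1}$ yields
\[
f_i = v_i\frac{e^{s_i}}{Z_i} + \frac{\sum_{j<i} v_j e^{s_j}}{Z_i},
\]
which is the invariant at $i$. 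So the recurrence is an online, numerically stable softmax: the two sigmoid gates form a convex combination that reweights the old average against the new value.

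At $i=n$ the invariant says that $f_n$ is already exactly the attention output of Eq.~\ref{eq:approx:accumulator}. This is where I expect the main obstacle, namely reconciling the invariant with the stated output map $\beta(h_n) = f_n/e^{g_n}$ and the stated description of $f_i$ as an unnormalized running sum. These two descriptions are incompatible with the given gated update, and the sigmoid gates already divide by $Z_i$. I would therefore fix the construction in one of two ways. The first option is to keep the update and take $\beta(h_n)=f_n$. The second is to keep $\beta(h_n)=f_n e^{-g_n}$ and replace the $f$-update by the ungated $f_i = f_{i-1} + v_i e^{s_i}$, whose invariant $f_i = \sum_{j\le i} v_je^{s_j}$ follows by a one-line induction and gives $f_n/e^{g_n} = f_n/Z_n$. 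Either way the equality claimed in Theorem~\ref{thm:equivalence} follows.

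Finally, I would note that $q_n$ depends on the last position. Hence the RNN is parameterized by $q_n$: equivalently, it reads the pairs $(k_i,v_i)$ for a fixed query, exactly as the statement's $g^*(\alpha,(k_1v_1,\dots,k_nv_n))$ indicates. The theorem is therefore an equality for each fixed $n$, and no claim of a single query-independent recurrence is needed.
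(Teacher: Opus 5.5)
Your proof is correct. It has the same skeleton as the paper's: induction over positions, log-sum-exp for the first coordinate, and $\sigma(a-b)=e^{a}/(e^{a}+e^{b})$ for the gates. The difference is the invariant you carry for the second coordinate, and yours is the one that actually holds.

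The paper asserts $f_i=\sum_{j\le i}v_je^{s_j}$. This fails already at the base case, since $\alpha$ sets $f_1=v_1$ rather than $v_1e^{s_1}$. It also fails the inductive step: substituting it into the paper's own line $f_i=\frac{e^{s_i}}{Z_i}v_i+\frac{Z_{i-1}}{Z_i}f_{i-1}$ gives $\bigl(e^{s_i}v_i+Z_{i-1}\sum_{j<i}v_je^{s_j}\bigr)/Z_i$, not $\sum_{j\le i}v_je^{s_j}$.

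The gated update maintains the normalized running average, exactly as you prove. So with $\beta(h_n)=f_n/e^{g_n}$, the constructed RNN outputs the attention value divided by $Z_n$. The statement as written is therefore false whenever $Z_n\neq 1$; for example, $n=1$ and $s_1=\log 2$ gives $v_1/2$ instead of $v_1$. Your diagnosis and your first repair, $\beta(h_n)=f_n$, are a genuine correction of the paper, not a detour.

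Your reading of $\alpha$ as $h_1$, with transitions applied only for $i=2,\dots,n$, is also necessary. It is what the paper's base case tacitly assumes. Applying the transition at $i=1$ from $h_0=\alpha$ would give position $1$ double weight in the partition function.

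There is one slip in your second repair. If you keep $\alpha=(s_1,v_1)$ and switch to the ungated update $f_i=f_{i-1}+v_ie^{s_i}$, the invariant $f_i=\sum_{j\le i}v_je^{s_j}$ fails at $i=1$, for the same reason the paper's does. You must also change the initial second component to $v_1e^{s_1}$, or start from $(g_0,f_0)=(-\infty,0)$ and apply the transition from $i=1$.
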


\begin{proof}[Sketch]
We prove by induction that after processing the \(i\)-th element, the RNN's state contains the correct intermediate values:
\[
e^{g_i} = \sum_{j=1}^i e^{q_n^\top k_j}, \qquad f_i = \sum_{j=1}^i v_j e^{q_n^\top k_j}.
\]
The base case \(i=1\) holds by the definition of $\alpha$. For the inductive step:
\begin{itemize}
    \item The update for \(g_i\) is defined as the log-sum-exp of the previous accumulator and the new term, which correctly computes the new log-partition.
    \item The update for \(f_i\) is a weighted sum. Using the identities for the sigmoid function,
    \[
    \begin{aligned}
    \sigma(q_n^\top k_i - g_{i-1}) &= \frac{e^{q_n^\top k_i}}{e^{g_{i-1}} + e^{q_n^\top k_i}} = \frac{e^{q_n^\top k_i}}{e^{g_i}}, \\
    \sigma(g_{i-1} - q_n^{\top k_i}) &= 
    \frac{e^{g_{i-1}}}{e^{g_{i-1}} + e^{q_n^\top k_i}} = \frac{e^{g_{i-1}}}{e^{g_i}}.
    \end{aligned}
    \]
    Substituting these into the transition function shows that:
    \[
    f_i = \frac{e^{q_n^\top k_i}}{e^{g_i}} v_i + \frac{e^{g_{i-1}}}{e^{g_i}} f_{i-1}.
    \]
    By the inductive hypothesis, $f_{i-1} = \sum_{j=1}^{i-1} v_j e^{q_n^\top k_j}$ and $e^{g_{i-1}} = \sum_{j=1}^{i-1} e^{q_n^\top k_j}$. Therefore, this update rule ensures $f_i = \sum_{j=1}^i v_j e^{q_n^\top k_j}$.
\end{itemize}
The output function \(\beta\) then divides \(f_n\) by \(e^{g_n}\) to produce the final normalized output as in Equation~\ref{eq:approx:accumulator}, which is precisely the output of the attention mechanism, completing the proof.
\end{proof}

\paragraph{Interpretation.} 
This construction shows that the softmax self-attention mechanism in a transformer layer is not inherently non-recurrent. It can be \emph{exactly} computed by an RNN with a carefully designed state and transition function. The RNN's state maintains two accumulators: one for the normalizing denominator and one for the numerator of the attention-weighted sum. This establishes a formal equivalence between the two models for this specific computation.

\subsection{Spectral Learning for RNN Approximation}
\label{sec:approx:spectral}
We now reduce the RNN constructed in the previous section, to a WFA using the spectral learning method \cite{balle2012spectral,balle2020learning,Okudono2019WAE}. The core insight is that the function \(f_R: \Sigma^* \to \mathbb{R}\) computed by the RNN defines a Hankel matrix \(H\) whose rank reveals the intrinsic state complexity of \(f_R\). If this matrix is low-rank, the function can be compactly represented by a WA \cite{schutzenberger1961definition}.

A direct application of the spectral method is infeasible for unbounded sequences. To manage this, we introduce a truncation parameter \(\theta \in \mathbb{N}\), restricting our analysis to the set of sequences of length at most \(\theta\), denoted \(\Sigma^{\leq\theta}\). For any function \(f:\Sigma^*\to\mathbb{R}\), we write \(f_{\leq\theta}\) for its restriction to this finite domain.

\paragraph{Spectral Learning Procedure.}
The procedure for converting the RNN \(R\) to a WA \(A\) is formalized in Algorithm~\ref{alg:rnn-to-wa} and involves the following steps:

\begin{enumerate}
    \item \textbf{Define Prefix and Suffix Sets:} Fix the truncation length \(\theta\). Set the prefix set \(P\) and the suffix set \(S\) to both be \(\Sigma^{\leq \theta}\).
    \item \textbf{Construct the Hankel Matrix:} Form the Hankel matrix \(H \in \mathbb{R}^{P \times S}\), where each entry is defined by the output of the RNN:
    \begin{equation}
    \label{eq:approx:hankel}
    H[p, s] = f_R(p s), \quad \text{for all } p \in P, s \in S.
    \end{equation}
    Constructing \(H\) requires evaluating the RNN on all concatenations \(ps\) where \(|p|, |s| \leq \theta\).
    \item \textbf{Compute Low-Rank Approximation:} Compute a rank-\(r\) truncated Singular Value Decomposition (SVD) of the Hankel matrix:
    \[
    H \approx U_r \Sigma_r V_r^\top,
    \]
    where \(r\) is the target dimension of the WA. The matrices \(U_r\) and \(V_r\) provide orthonormal bases for the column and row spaces of \(H\), respectively, and \(\Sigma_r\) contains the top \(r\) singular values.
    \item \textbf{Extract WA Parameters:} The WA parameters are derived by projecting the Hankel matrix and its shifts into the low-dimensional space defined by the SVD.
    \begin{itemize}
        \item \textbf{Initial and Final Vectors:} Let \(e_\epsilon\) be the basis vector corresponding to the empty string. The initial vector \(\nu\) and final vector \(\tau\) are defined as:

\begin{equation}
\label{eq:approx:initial-final}
        \nu = U_r \Sigma_r^{1/2} e_{\epsilon}, \quad \tau = (\Sigma_r^{1/2} V_r^\top)^\top e_{\epsilon}.
\end{equation}
        \item \textbf{Transition Matrices:} For each symbol \(\sigma \in \Sigma\), form the shifted Hankel matrix \(H_\sigma\), where:
        \begin{equation}
        \label{eq:approx:shifted-hankel}
        (H_\sigma)[p, s] = f_R(p \sigma s).
        \end{equation}
        The transition matrix \(M_\sigma\) for symbol \(\sigma\) is then obtained by projection:
        \begin{equation}
        \label{eq:approx:project}
        M_\sigma = \Sigma_r^{-1/2} U_r^\top H_\sigma V_r \Sigma_r^{-1/2}.
        \end{equation}
    \end{itemize}
\end{enumerate}

\paragraph{Outcome.}
The resulting automaton \(A = (\nu, \{M_\sigma\}_{\sigma \in \Sigma}, \tau)\) approximates the behavior of the RNN \(R\) on sequences up to length \(\theta\). If the Hankel matrix \(H\) has exact rank \(r\), the recovery by the WA is exact for all \(w \in \Sigma^{\leq \theta}\). Otherwise, the approximation error is bounded by the spectral error of the truncated SVD \cite{davis1970rotation}.

\begin{algorithm}[t]
\caption{\textsc{Convert-RNN-to-WA}($R$, $\theta$, $r$)}
\label{alg:rnn-to-wa}
\KwIn{RNN $R$, truncation length $\theta$, target rank $r$.}
\KwOut{WA $A=(\nu, \{M_\sigma\}, \tau)$ of dimension $r$.}
\BlankLine
Set $P \gets \Sigma^{\leq \theta}$, $S \gets \Sigma^{\leq \theta}$ and index them arbitrarily.\\
\For{$p \in P, s \in S$}{
    $H[p, s] \gets f_R(ps)$ \tcp*{Construct Hankel matrix (Eq. \ref{eq:approx:hankel})}
}
Compute rank-$r$ truncated SVD: $H \approx U_r \Sigma_r V_r^\top$\\
Define vectors $\nu \gets U_r \Sigma_r^{1/2} e_{\epsilon}$ and $\tau \gets (\Sigma_r^{1/2} V_r^\top)^\top e_{\epsilon}$ \tcp*{(Eq. \ref{eq:approx:initial-final})}
\For{$\sigma \in \Sigma$}{
    \For{$p \in P, s \in S$}{
        $H_\sigma[p, s] \gets f_R(p \sigma s)$ \tcp*{Construct shifted matrix (Eq. \ref{eq:approx:shifted-hankel})}
    }
    $M_\sigma \gets \Sigma_r^{-1/2} U_r^\top H_\sigma V_r \Sigma_r^{-1/2}$ \tcp*{Project to get transitions (Eq. \ref{eq:approx:project})}
}
\Return $A = (\nu, \{M_\sigma\}, \tau)$
\end{algorithm}

\subsection{Correctness and Analysis of the Reduction}

We now establish the correctness of the conversion procedure outlined in Algorithm~\ref{alg:rnn-to-wa}. The following theorem characterizes the approximation guarantees, separating errors arising from truncation and those from spectral approximation.

\begin{theorem}[Approximation Guarantees]
\label{thm:convert-correct}
Let \(R\) be an RNN with induced function \(f_R: \Sigma^* \to \mathbb{R}\). Fix a truncation length \(\theta > 0\) and let \(f_{R,\leq\theta}\) be its restriction to \(\Sigma^{\leq\theta}\). Let \(H\) be the Hankel matrix indexed by \(P = S = \Sigma^{\leq\theta}\) and let \(A\) be the WA returned by Algorithm~\ref{alg:rnn-to-wa} with target rank \(r\). Then:
\begin{enumerate}
    \item \textbf{Exact Recovery:} If \(\mathrm{rank}(H) = r\) and the SVD is computed exactly, then \(f_A(w) = f_R(w)\) for all \(w \in \Sigma^{\leq\theta}\).
    \item \textbf{Approximate Recovery:} If \(H\) is approximated by a rank-\(r\) matrix \(\widehat{H}\) such that \(\|H - \widehat{H}\|_2 \leq \eta\), then for every \(w \in \Sigma^{\leq\theta}\), the pointwise error is bounded by:
    \begin{equation}
    \label{eq:approx:approx-bound}
    |f_A(w) - f_R(w)| \leq C(\theta, |\Sigma|, r) \cdot \eta,
    \end{equation}
    where \(C(\theta, |\Sigma|, r)\) is a polynomial factor dependent on the truncation length, alphabet size, and target rank.
    \item \textbf{Full-Space Guarantee:} For a target distribution \(Q\) over \(\Sigma^*\), if the tail mass beyond length \(\theta\) is bounded (\(\sum_{|w|>\theta} Q(w) \leq \delta\)) and the approximation error on \(\Sigma^{\leq\theta}\) is bounded (\(\|f_A - f_R\|_1 \leq \gamma\)), then the WA \(A\) satisfies:
    \begin{equation}
    \label{eq:approx:full-bound}
    \|f_A - Q\|_1 \leq \|f_R - Q\|_1 + \gamma + 2\delta.
    \end{equation}
    If the RNN itself is an approximation to \(Q\) (i.e., \(\|f_R - Q\|_1 \leq \epsilon\)), then the total error is \(\epsilon + \gamma + 2\delta\).
\end{enumerate}
\end{theorem}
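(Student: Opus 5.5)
The three parts are essentially independent. Part~1 is the Schützenberger/Fliess factorization argument, Part~2 is matrix perturbation, and Part~3 is a triangle inequality with a tail split. I would prove them in that order, reusing the algebraic identity from Part~1 as the template for the perturbation bound in Part~2.

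\textbf{Exact recovery.} Write $H = U_r\Sigma_r V_r^\top$ exactly, and set the forward factor $F = U_r\Sigma_r^{1/2}$ and backward factor $B = \Sigma_r^{1/2}V_r^\top$, so $H = FB$. Because $\operatorname{rank}(H)=r$, the pseudo-inverses satisfy $F^+F = I_r = BB^+$, with $F^+ = \Sigma_r^{-1/2}U_r^\top$ and $B^+ = V_r\Sigma_r^{-1/2}$. Hence $M_\sigma = F^+ H_\sigma B^+$, matching Eq.~\ref{eq:approx:project}. The key step is the shift identity $H_\sigma = F M_\sigma B$. It holds because the column space of $H_\sigma$ lies in the column space of $H$: each column $s\mapsto f_R(\cdot\,\sigma s)$ is, for $|\sigma s|\le\theta$, itself a column of $H$, and rank stability handles the remaining boundary columns. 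By induction on $|w|$, this yields $f_R(p\,w\,s) = F_p M_{w_1}\cdots M_{w_n} B_s$ whenever all intermediate strings stay inside the index sets, since $P=S=\Sigma^{\le\theta}$ is prefix- and suffix-closed. Taking $p=s=\epsilon$ gives $f_A(w)=\nu^\top M_w\tau=f_R(w)$ for $|w|\le\theta$; here $\nu,\tau$ are read as the $\epsilon$-row of $F$ and the $\epsilon$-column of $B$. I expect the main obstacle to be this boundary/closure issue: the shifted columns must be shown to remain in the span. I would handle it by splitting $w$ as $w=uv$ with $|u|,|v|\le\theta$ and peeling off one symbol at a time.

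\textbf{Approximate recovery.} Compare $A$ with the exact automaton $\widehat A$ built from $\widehat H$, where Part~1 applies. Wedin's/Davis--Kahan bounds \cite{davis1970rotation} control $\|U_r-\widehat U_r\|$ and $\|V_r-\widehat V_r\|$ by $\eta/\sigma_r$. Combined with $\|H_\sigma-\widehat H_\sigma\|\le\eta$, this gives $\|M_\sigma-\widehat M_\sigma\|\le c\,\eta$ and similar bounds for $\nu$ and $\tau$. Then I would expand the product by telescoping:
\[
\nu^\top M_{w_1}\cdots M_{w_n}\tau-\hat\nu^\top\widehat M_{w_1}\cdots\widehat M_{w_n}\hat\tau
\]
is a sum of $n+2$ terms, each containing one difference factor. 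This gives the bound $(n+2)\,c\,\eta\,\max(1,\|M\|)^{n}$. Absorbing the norms and $1/\sigma_r$ into $C(\theta,|\Sigma|,r)$ gives Eq.~\ref{eq:approx:approx-bound}. I would note that $C$ is polynomial only when $\|M_\sigma\|\le1$, as for stochastic-type automata; otherwise the constant is exponential in $\theta$, and I would state that assumption explicitly.

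\textbf{Full-space guarantee.} Split $\Sigma^* = \Sigma^{\le\theta}\cup\Sigma^{>\theta}$. On the short strings, the triangle inequality gives
\[
\sum_{|w|\le\theta}|f_A-Q| \le \gamma + \sum_{|w|\le\theta}|f_R-Q|.
\]
On the long strings, I would interpret $f_A$ as supported on $\Sigma^{\le\theta}$, or assume its tail mass is at most $\delta$. Then
\[
\sum_{|w|>\theta}|f_A-Q| \le \sum_{|w|>\theta}|f_A| + Q(\Sigma^{>\theta}) \le 2\delta.
\]
The RNN tail term can be bounded by $\sum_{|w|>\theta}|f_R-Q|$, or dropped. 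Summing the two pieces gives Eq.~\ref{eq:approx:full-bound}, and substituting $\|f_R-Q\|_1\le\epsilon$ gives $\epsilon+\gamma+2\delta$.
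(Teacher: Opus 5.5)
You located the right obstacle in Part~1, but the ``rank stability'' you invoke for the boundary columns is not available from the hypotheses. In fact no argument can supply it, because Part~1 is false as stated. A column of $H_\sigma$ indexed by $s$ with $|s|=\theta$ is $p\mapsto f_R(p\sigma s)$. Its entries are values of $f_R$ on strings of length $\theta+1,\dots,2\theta+1$ that never appear in $H$, and $\mathrm{rank}(H)=r$ does not constrain them. Splitting $w=uv$ and peeling off symbols does not avoid them either. Each $M_\sigma=\Sigma_r^{-1/2}U_r^\top H_\sigma V_r\Sigma_r^{-1/2}$ is built from all of $H_\sigma$, so the boundary entries leak into $f_A(w)$ through the projection $U_rU_r^\top$.

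Here is a concrete counterexample. Let $\Sigma=\{a\}$, $\theta=r=1$, and take the RNN with $h_0=0$, $g(h,a)=h+1$, and $\beta(h)=1$ for $h\le 2$, $\beta(h)=0$ otherwise. Then $f_R(\epsilon)=f_R(a)=f_R(aa)=1$ and $f_R(aaa)=0$. We get $H=\begin{pmatrix}1&1\\1&1\end{pmatrix}$, which has rank $1$, and $H_a=\begin{pmatrix}1&1\\1&0\end{pmatrix}$. The SVD gives $U_1=V_1=(1,1)^\top/\sqrt{2}$ with singular value $2$, hence $\nu=\tau=1$, $M_a=3/4$, and $f_A(a)=3/4\neq 1=f_R(a)$. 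In this example $f_A(a)=(3+f_R(aaa))/4$, which depends on a value absent from $H$.

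The paper's sketch does not close this gap either. It only cites spectral learning theory, whose exact-recovery result assumes a complete basis: $f_R$ rational with Hankel rank equal to $\mathrm{rank}(H)=r$. You need that extra hypothesis, or directly $\mathrm{col}(H_\sigma)\subseteq\mathrm{col}(H)$ for every $\sigma$, which completeness implies. With it your argument is short and needs only the forward direction:
\begin{itemize}
\item $FM_\sigma=U_rU_r^\top H_\sigma V_r\Sigma_r^{-1/2}=H_\sigma B^+$.
\item The $p$-row of this matrix equals $F_{p\sigma}$ whenever $|p\sigma|\le\theta$.
\item By induction, $F_\epsilon M_{w_1}\cdots M_{w_n}=F_w$, so $f_A(w)=F_wB_\epsilon=H[w,\epsilon]=f_R(w)$ for $|w|\le\theta$.
\end{itemize}

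Setting $\widehat H=H$ and $\eta=0$ in Part~2 gives Part~1, so the same example refutes Part~2 as stated, and your perturbation argument inherits the gap. Its central step is also ill-posed. A rank-$r$ approximation $\widehat H$ is generally not the Hankel block of any function, and it comes with no shifted matrices $\widehat H_\sigma$. So there is no ``exact automaton $\widehat A$ built from $\widehat H$'' for Part~1 to apply to. The comparison has to be against a rational target with a complete basis, as in the standard spectral-learning perturbation analyses.

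Your own caveat is correct and should become an explicit hypothesis. The telescoped constant carries the inverse of the $r$-th singular value of $H$ and a factor $\max(1,\|M_\sigma\|)^{|w|}$. Neither is a function of $(\theta,|\Sigma|,r)$ alone, so a polynomial $C(\theta,|\Sigma|,r)$ needs assumptions absent from the statement.

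Part~3 uses the same tail split as the paper. The condition you add, that $f_A$ is supported on $\Sigma^{\le\theta}$ or $\sum_{|w|>\theta}|f_A(w)|\le\delta$, is genuinely necessary. The paper uses it silently when it bounds $\sum_{|w|>\theta}|f_A(w)-Q(w)|$ by $2\sum_{|w|>\theta}Q(w)$. Without it, $\|f_A-Q\|_1$ is not controlled by the hypotheses at all.
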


\begin{proof}[Proof Sketch]
(1) The exact recovery case is a direct consequence of the spectral learning theory for WFA \cite{balle2012spectral,balle2020learning}. When \(\mathrm{rank}(H) = r\), the truncated SVD yields a minimal linear representation for the function \(f_{R,\leq\theta}\), which is exactly captured by the constructed WA \(A\).

(2) The approximate case follows from perturbation analysis. The error \(\eta\) in the Hankel matrix approximation propagates to the parameters \((\nu, \{M_\sigma\}, \tau)\) of the WA. Standard bounds on the sensitivity of the SVD \cite{davis1970rotation} imply that this propagation is controlled, leading to a pointwise error that is polynomial in the problem parameters and linear in \(\eta\).

(3) The full-space guarantee is obtained by a decomposition of the \(\ell_1\) error:
\begin{align*}
\|f_A - Q\|_1 &= \sum_{w \in \Sigma^*} |f_A(w) - Q(w)|  \\
&\leq \underbrace{\sum_{|w| \leq \theta} |f_A(w) - f_R(w)|}_{\leq \gamma} + \underbrace{\sum_{|w| \leq \theta} |f_R(w) - Q(w)|}_{\leq \epsilon} + \underbrace{2 \sum_{|w| > \theta} Q(w)}_{\leq 2\delta}.
\end{align*}

The result follows from applying the bounds for each term.
\end{proof}

\paragraph{Complexity Analysis.}
The computational complexity of Algorithm~\ref{alg:rnn-to-wa} is dominated by the construction of the Hankel matrices and the SVD computation. Let \(k = |\Sigma^{\leq \theta}| = O(|\Sigma|^\theta)\). The algorithm requires:
\begin{itemize}
    \item \(O(k^2)\) evaluations of \(f_R\), each requiring \(O(\theta \cdot T_{\text{step}})\) time, where \(T_{\text{step}}\) is the cost of one RNN transition.
    \item \(O(k^3)\) time for the SVD of a \(k \times k\) matrix.
    \item \(O(|\Sigma| \cdot k^2)\) memory for storing the Hankel and shifted Hankel matrices.
\end{itemize}
This exponential dependence on \(\theta\) limits practical application to small truncation lengths, though sampling techniques can mitigate this cost. 
\cite{HsuKakadeZhang2009}
\paragraph{Remarks.}
This reduction demonstrates that the function computed by the RNN (and hence, by Theorem~\ref{thm:equivalence}, the self-attention mechanism) can be approximated by a WFA. The quality of approximation is governed by the truncation length $\theta$ and the spectral rank \(r\) of the finite Hankel matrix. This provides a foundational link between transformers and automata theory, enabling the application of formal methods to analyse their behaviour.
\section{Identity Testing for Rational Stochastic Languages}
\label{sec:testing}

\begin{figure}[t]
    \centering
\begin{tikzpicture}[scale=1.0, >=Latex, thick]
\draw [fill=gray!15, draw=gray!50, rounded corners] (0.5,-0.2) rectangle (12.5,2.3);
\node at (6.5,2) [font=\small\bfseries, color=gray!80] {$\Sigma^* =$ All strings};

\draw [draw=blue!70!black, dashed, rounded corners, thick] (1,0) rectangle (8.5,1.6);
\node at (4.75,1.3) [font=\small, color=blue!70!black] {Truncated support $\Sigma^{\le \theta}$};
\node at (4.75,0.8) [font=\normalsize\bfseries, color=blue!55!black] {$Q_{\le \theta}$};
\node at (4.75,0.3) [font=\footnotesize, color=blue!60!black] {Main mass $> 1 - \frac{\varepsilon}{3}$};

\draw [draw=red!70!black, dashed, rounded corners, thick] (9.5,0) rectangle (12,1.6);
\node at (10.75,1.3) [font=\small, color=red!70!black] {Tail $\Sigma^{\le \theta}$};
\node at (10.75,0.8) [font=\normalsize\bfseries, color=red!55!black] {$Q_{>\theta}$};
\node at (10.75,0.3) [font=\footnotesize, color=red!60!black] {Tail mass $< \frac{\varepsilon}{3}$};

\draw [->, thick, black] (8.5,0.8) -- (9.5,0.8);

\end{tikzpicture}
    \caption{
Schematic of truncation and tail error decomposition. The gray rectangle represents the full domain $\Sigma^*$. 
The blue box is the truncated support $\Sigma^{\le \theta}$, containing most of the probability mass ($Q_{\le \theta} > 1 - \varepsilon/3$). 
Truncation ensures the tail mass $Q_{>\theta}$ is bounded by $\varepsilon/3$, and hence contributes minimally. 
The total $\ell_1$ error is decomposed as 
$\gamma$ (estimation) $+ C\eta$ (spectral) $+ \delta$ (tail mass), 
highlighting that the bulk of the distribution is concentrated in a manageable region while the tail is small and controlled.
}
    \label{fig:testing}
\end{figure}

A fundamental verification question is whether the distribution $P$ generated by a black-box model (e.g., an LLM) is close to a known, well-understood reference distribution $Q$. For our purposes, $Q$ is a rational stochastic language, specified by a weighted automaton or a stochastic regular expression (SRE). This motivates the study of \emph{identity testing} (also known as goodness-of-fit testing) for distributions over infinite domains \cite{Canonne2022Price}.

In this section, we present an identity testing algorithm for rational stochastic languages, which, to our knowledge, is the first to tackle this problem for infinite domains by leveraging the specific structure of rational distributions. The algorithm reduces the infinite-domain problem to a finite one via a tailored truncation strategy (illustrated in Figure~\ref{fig:testing}), enabling the use of sample-optimal finite-domain testers.

\begin{definition}[Identity Testing for Stochastic Languages]
\label{def:identity}
Let $Q$ be a known rational stochastic language, and let \( P \) be an unknown distribution over \( \Sigma^* \) accessible only via i.i.d. samples. Given parameters \( \varepsilon > 0 \) and \( \delta \in (0,1) \), an \emph{identity tester} is an algorithm that, with probability at least \( 1 - \delta \), outputs:
\begin{itemize}
    \item[$\bullet$] \textup{\textsc{Accept}} if \( \|P - Q\|_1 < \varepsilon \), and
    \item[$\bullet$] \textup{\textsc{Reject}} if \( \|P - Q\|_1 > \varepsilon \).
\end{itemize}
The sample complexity of the tester is the number of samples required from \( P \) to achieve this guarantee.
\end{definition}

We employ the \(\ell_1\) distance as it corresponds to the total variation distance, providing a robust and interpretable measure of discrepancy across the entire support. This is preferable to the \(\ell_\infty\) distance, which is very sensitive to pointwise errors, and the KL divergence, which requires absolute continuity.

\subsection{Reduction to Finite Support via Truncation}
\label{subsec:reduction}

The core of our approach is to truncate the distributions to strings of bounded length, ensuring the tail mass of the reference distribution \( Q \) is negligible. This reduction allows us to apply finite-domain identity testers.

\begin{definition}[Truncation Length $\theta$]
For a distribution $Q$, define $\theta(Q, \xi)$ as:

\[
\theta(Q, \xi) = \min_{\semantics{\mathcal{A}} = Q} \left( |\mathcal{A}_Q| \cdot \max\left\{1, \left\lceil \frac{\log(\xi)}{\log(1 - \rho(\mathcal{A}))} \right\rceil\right\} \right)
\]
where the minimum is taken over all equivalent WFAs $\mathcal{A}$ that represent $Q$, $|\mathcal{A_Q}|$ is the number of states, and $\rho(\mathcal{A})$ is the spectral radius (maximum eigenvalue) of the automaton's transition matrix.
\end{definition}

\begin{lemma}
    Let \( \varepsilon > 0 \) be the target accuracy. For the reference distribution \( Q \), the truncation length \( \theta = \theta(Q, \varepsilon/3) \) satisfies:
\[
Q_{> \theta} := \sum_{x \in \Sigma^{> \theta}} Q(x) < \frac{\varepsilon}{3}.
\]
\end{lemma}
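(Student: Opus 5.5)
Fix a WFA $\mathcal{A}=(\Sigma,Q,\lambda,\mu,\rho)$ attaining the minimum in the definition of $\theta(Q,\varepsilon/3)$. Write it in matrix form: initial row vector $\lambda^\top$, final column vector $\mu$, transition matrices $M_\sigma$, so that $Q(w)=\lambda^\top M_{w_1}\cdots M_{w_n}\mu$. Set $M=\sum_{\sigma\in\Sigma}M_\sigma$; this is the ``transition matrix'' whose spectral radius $\rho(\mathcal{A})$ appears in the definition. The plan is to bound the mass of each length level $\Sigma^{=n}$ geometrically in $n$ and then sum the tail.

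\textbf{Step 1 (per-length mass).} Summing $Q$ over all strings of length exactly $n$ gives
\[
Q(\Sigma^{=n}) \;=\; \sum_{|w|=n} Q(w) \;=\; \lambda^\top M^n \mu .
\]
I will work in the non-negative (probabilistic) setting $\Stoch_{\R^+}^{rat}$, and normalize $\mathcal{A}$ so that $\lambda$ is a probability vector and $M\mathbf{1}+\mu=\mathbf{1}$. With this normalization $M$ is substochastic. Since $Q$ is a proper distribution, its total mass $\sum_n\lambda^\top M^n\mu$ equals $1$, which forces $M^n\to 0$ and hence $\rho(M)<1$.

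\textbf{Step 2 (block decay).} I will use the standard fact that a substochastic matrix with spectral radius below $1$ contracts in at most $|Q|$ steps: the probability of surviving $|Q|$ steps from any state is at most some $c<1$. The definition is built to match this, with $c=1-\rho(\mathcal{A})$ playing the role of the per-block escape parameter. This gives
\[
\mathbf{1}^\top\lambda\, M^{k|Q|}\mathbf{1}\le (1-\rho)^k ,
\]
and the tail mass beyond length $k|Q|$ is at most this survival probability. Choose
\[
k=\max\{1,\lceil \log(\varepsilon/3)/\log(1-\rho)\rceil\}.
\]
Then $(1-\rho)^k\le\varepsilon/3$, and $\theta=|Q|\cdot k$ yields $Q_{>\theta}\le\varepsilon/3$. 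A strict inequality follows if the survival probability is strictly below $(1-\rho)^k$; alternatively one slightly enlarges the ceiling.

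\textbf{Main obstacle.} The hard step is Step 2, namely justifying that one block of $|Q|$ transitions loses a $\rho$-fraction of mass, i.e.\ that the survival factor per block is $(1-\rho(\mathcal{A}))$. The definition's use of the spectral radius is loose: literally, decay per step is governed by $\rho(M)$ itself, which would give the bound $Q(\Sigma^{>\theta})\lesssim C\rho(M)^{\theta}$ rather than one in terms of $1-\rho$. I would first try to prove the block statement via a reachability argument: every state reaches termination within $|Q|$ steps with probability at least $\rho$. If that fails, I would reinterpret $\rho(\mathcal{A})$ as this minimal per-block termination probability, so that the geometric bound holds by construction.

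For real-weighted (signed) automata, Step 1 still holds but Step 2 needs a norm bound $\|M^n\|\le C\rho(M)^n$. That introduces a constant $C$ which the stated formula omits, so I would restrict the lemma to the non-negative case.
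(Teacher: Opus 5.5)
\textbf{The gap is your Step 2, and it cannot be closed, because the lemma is false when $\rho(\mathcal{A})$ is the spectral radius.}

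What substochasticity and $\rho(M)<1$ do give, for a trimmed and normalized automaton, is a block contraction $c:=\max_i (M^{|Q|}\mathbf{1})_i<1$: every state reaches a halting state within $|Q|$ steps. But nothing ties $c$ to $1-\rho$. In fact $c=\|M^{|Q|}\|_\infty\ge\rho(M)^{|Q|}$, so $c\le 1-\rho$ fails as soon as $\rho^{|Q|}>1-\rho$.

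Here is a concrete counterexample. Take $\Sigma=\{a\}$ and the one-state automaton $\lambda=1$, $M_a=9/10$, $\mu=1/10$, so $Q(a^n)=(9/10)^n/10$, and take $\varepsilon=3/10$.
\begin{itemize}
\item Then $\log(\varepsilon/3)/\log(1-\rho)=\log(1/10)/\log(1/10)=1$, hence $\theta=1$.
\item This is the least value the formula can take, so no other representation of $Q$ changes it.
\item Yet $Q_{>1}=(9/10)^2=0.81>\varepsilon/3$.
\end{itemize}
This automaton is non-negative, so restricting to $\Stoch_{\R^+}^{rat}$ does not rescue the statement. Your own remark that the true decay is $C\rho^{L}$, not anything in terms of $1-\rho$, is exactly the point. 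It is fatal, not merely an obstacle.

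\textbf{The paper's proof does not avoid this.}
\begin{itemize}
\item It starts from the right kind of bound, $Q_{>L}\le C\rho^{L-n+1}$.
\item It then claims $\lceil\log(\varepsilon/3)/\log(1-\rho)\rceil\ge\log(3/\varepsilon)/\rho$ from $\log(1-\rho)\le-\rho$. Since both logarithms are negative, $-\log(1-\rho)\ge\rho$ gives the reverse inequality.
\item The next step, $\rho^\theta\le e^{-n\log(3/\varepsilon)}$, would further need $\log\rho\le-\rho$, which is false for $\rho$ near $1$.
\item The proof closes with an unquantified ``for sufficiently small $\varepsilon$''.
\end{itemize}

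\textbf{On your fallbacks.} Redefining $\rho(\mathcal{A})$ as the minimal probability of halting within $|Q|$ steps does make Step 2 correct. It yields $Q_{>\theta}\le(1-\rho)^k\le\varepsilon/3$, but that is a different lemma, and the strict inequality needs an extra argument. With the genuine spectral radius, a provable version needs a threshold of the form $\log(C/\xi)/\log(1/\rho)$, with an automaton-dependent constant $C$.

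Note also that your Step 1 claim $M^n\to 0$ requires trimming the automaton first. Useless states can carry arbitrarily large spectral radius, which also makes $\log(1-\rho(\mathcal{A}))$ undefined for untrimmed representations.
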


\begin{proof}
Let $\mathcal{A}$ be a WFA realizing $Q$ that achieves the minimum in the definition of $\theta(Q, \varepsilon/3)$. Let $n$ be its number of states and $\rho$ its spectral radius.

By the spectral properties of WFAs, the probability that a string generated by $\mathcal{A}$ has length greater than $L$ decays roughly as $O(\rho^L)$ in the sense of total mass. More formally, there exists a constant $C$ such that for all $L \ge n$,
\[
Q_{> L} \le C \cdot \rho^{L - n + 1}.
\]

Choosing $L = \theta$ as defined, we have $\theta \ge n \cdot \max\left\{1, \left\lceil \frac{\log(\varepsilon/3)}{\log(1 - \rho)} \right\rceil\right\}$. In particular, $\theta \ge n \cdot \left\lceil \frac{\log(\varepsilon/3)}{\log(1 - \rho)} \right\rceil$ when $\rho < 1$.

Using the inequality $\log(1 - \rho) \le -\rho$, we get
\[
\left\lceil \frac{\log(\varepsilon/3)}{\log(1 - \rho)} \right\rceil \ge \frac{\log(1/(\varepsilon/3))}{\rho} = \frac{\log(3/\varepsilon)}{\rho}.
\]

Thus $\theta \ge \frac{n \log(3/\varepsilon)}{\rho}$, so $\rho^\theta \le e^{-n \log(3/\varepsilon)} = (\varepsilon/3)^n$.

Hence $Q_{> \theta} \le C \cdot (\varepsilon/3)^n \cdot \rho^{-n+1}$. For sufficiently small $\varepsilon$, this is less than $\varepsilon/3$ (adjusting constants if needed by the minimization over WFAs).
\end{proof}

\begin{algorithm}[t]
\caption{$\ell_1$-\textsc{IdentityTester} for Rational Stochastic Languages}
\label{alg:l1-identity-tester}
\KwIn{Known distribution $Q$, sample access to $P$, parameters $\varepsilon > 0$, $\delta \in (0,1)$}
\KwOut{\textup{\textsc{Accept}} if $\|P - Q\|_1 < \varepsilon$; \textup{\textsc{Reject}} if $\|P - Q\|_1 > \varepsilon$ (w.p. $\geq 1-\delta$)}
\BlankLine
Compute truncation length $\theta(Q, \varepsilon/3)$.\\
Let $k \gets |\Sigma^{\leq \theta}| = \Theta(|\Sigma|^{\theta+1})$.\\
Draw $N = \Theta\left( \frac{\sqrt{k}}{\varepsilon^2} + \frac{k}{\log k} \cdot \log(\frac{1}{\delta}) \right)$ i.i.d. samples from $P$.\\
Let $\eta$ be the fraction of samples with $|x| > \theta$.\\
\If{$\eta > \varepsilon/3$}{
    \Return \textup{\textsc{Reject}} \tcp*{Empirical tail mass is too large}
}
Let $\widehat{P}_{\le \theta}$ be the empirical distribution of the remaining samples over $\Sigma^{\le \theta}$.\\
Let $Q_{\le \theta}$ be the restriction of $Q$ to $\Sigma^{\le \theta}$.\\
Run a finite-domain \emph{tolerant} identity tester on $\widehat{P}_{\le \theta}$ and $Q_{\le \theta}$ with distance thresholds $(\varepsilon/3, \varepsilon)$ and confidence $1-\delta$.\\
\Return the output of the finite-domain tester.
\end{algorithm}

\subsection{Correctness and Sample Complexity Analysis}
\label{subsec:correctness}

We now establish the formal guarantees for Algorithm~\ref{alg:l1-identity-tester}.

\begin{theorem}[Correctness of Identity Tester]
\label{thm:correctness}
Let \( Q \in \Stoch^{\mathrm{rat}}_{\mathbb{R}^+}(\Sigma^*) \) be given as an SRE, and let \( P \) be an unknown distribution over \( \Sigma^* \). For any \( \varepsilon > 0 \) and \( \delta \in (0,1) \), Algorithm~\ref{alg:l1-identity-tester} satisfies the following with probability at least \( 1 - \delta \):
\begin{itemize}
    \item[$\bullet$] \textbf{(Completeness)} If \( \|P - Q\|_1 < \varepsilon \), then output is \textsc{Accept}.
    \item[$\bullet$] \textbf{(Soundness)} If \( \|P - Q\|_1 > \varepsilon \), then output is \textsc{Reject}.
\end{itemize}
The sample complexity is:
\[
N = \Theta\left( \frac{\sqrt{k}}{\varepsilon^2} + \frac{k}{\log k} \cdot \log\left(\frac{1}{\delta}\right) \right), \quad \text{where } k = |\Sigma^{\leq \theta}|.
\]
\end{theorem}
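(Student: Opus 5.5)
The plan is to reduce the statement to three ingredients, combined by a union bound with failure budget $\delta/2$ for each probabilistic part: (i) the truncation Lemma above, which for $\theta=\theta(Q,\varepsilon/3)$ gives $Q_{>\theta}<\varepsilon/3$; (ii) concentration of the empirical tail fraction $\eta$ around $P_{>\theta}:=\sum_{|x|>\theta}P(x)$; and (iii) the assumed guarantee of the finite-domain tolerant tester on $\Sigma^{\le\theta}$, a domain of size $k$. Everything rests on the decomposition
\[
\|P-Q\|_1=\sum_{|x|\le\theta}|P(x)-Q(x)|+\sum_{|x|>\theta}|P(x)-Q(x)|,
\]
with the tail term sandwiched as
\[
|P_{>\theta}-Q_{>\theta}|\le \sum_{|x|>\theta}|P(x)-Q(x)|\le P_{>\theta}+Q_{>\theta}.
\]
Hence the truncated distance $D_{\le\theta}:=\|P_{\le\theta}-Q_{\le\theta}\|_1$ and the full distance differ by at most $P_{>\theta}+Q_{>\theta}$. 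First I would check that $Q$ given as an SRE yields a WFA with nonnegative weights, so that the spectral radius and $\theta$ are computable. This is what makes step~1 of Algorithm~\ref{alg:l1-identity-tester} well defined.

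For the tail check, Hoeffding's inequality gives $|\eta-P_{>\theta}|\le\varepsilon/6$ with probability $1-\delta/2$. This needs $O(\varepsilon^{-2}\log(1/\delta))$ samples, which the stated $N$ dominates. The same event controls the number of samples that fall in $\Sigma^{\le\theta}$, so that enough remain for the finite tester. I expect to need the renormalization $P_{\le\theta}/P(\Sigma^{\le\theta})$ here, and to bound the effect of conditioning by the tail mass.

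\textbf{Soundness.} Suppose $\|P-Q\|_1>\varepsilon$. If the algorithm has not already rejected because $\eta>\varepsilon/3$, then on the good event $P_{>\theta}\le\varepsilon/3+\varepsilon/6$. By the sandwich, $D_{\le\theta}$ is then at least $\varepsilon-P_{>\theta}-Q_{>\theta}$. The tolerant tester, run with far-threshold $\varepsilon$ and the constants absorbed into its thresholds, rejects with probability $1-\delta/2$.

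\textbf{Completeness.} Suppose $\|P-Q\|_1<\varepsilon$. Then $P_{>\theta}\le Q_{>\theta}+\|P-Q\|_1$, and I must show both that the tail check passes and that $D_{\le\theta}$ lies below the tester's close-threshold. This is the main obstacle. With an exact common threshold $\varepsilon$ for Accept and Reject, distances just below $\varepsilon$ are not separated from distances just above it by any slack. My first attempt is therefore to make the tolerant tester's guarantee carry the burden, as the algorithm states it: tolerant with thresholds $(\varepsilon/3,\varepsilon)$ at the sample size $\Theta(\sqrt{k}/\varepsilon^2+\frac{k}{\log k}\log(1/\delta))$, where the $k/\log k$ term is exactly the Valiant--Valiant tolerant-estimation rate. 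I would show that the truncated instance inherits the strict inequality: $D_{\le\theta}\le\|P-Q\|_1-|P_{>\theta}-Q_{>\theta}|<\varepsilon$, and the tester estimates $D_{\le\theta}$ to additive accuracy proportional to $\varepsilon$. The tail contribution then enters the decision only through the sandwich terms, which are each below $\varepsilon/3$.

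Finally, a union bound over the Hoeffding event and the tester event gives overall success probability $1-\delta$. The sample complexity is the maximum of the two requirements, which is the displayed $N$ with $k=|\Sigma^{\le\theta}|$.
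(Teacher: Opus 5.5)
Your route is the paper's route: truncation lemma, empirical tail check, tolerant tester on the head, glued by the head/tail decomposition. It breaks exactly where you flagged the ``main obstacle''.

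The paper's sketch only appears to close through two invalid inferences. For completeness it derives $\|P_{\le\theta}-Q_{\le\theta}\|_1<\varepsilon/3$ from $\|P-Q\|_1<\varepsilon$ and a tail bound of $2\varepsilon/3$, although $\|P-Q\|_1\le\text{head}+\text{tail}$ bounds the head only from below. For soundness it obtains head distance $>\varepsilon/3$ and lets a tester with thresholds $(\varepsilon/3,\varepsilon)$ reject, although such a tester is only obliged to reject above $\varepsilon$.

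Your version has the same two holes. After a passed tail check you know only $D_{\le\theta}>\varepsilon-\tfrac{\varepsilon}{2}-\tfrac{\varepsilon}{3}=\tfrac{\varepsilon}{6}$ in the soundness case, and only $D_{\le\theta}<\varepsilon$ in the completeness case. Neither ``constants absorbed into its thresholds'' nor ``additive accuracy proportional to $\varepsilon$'' can separate overlapping ranges. Nothing can, because the statement is false as written:
\begin{itemize}
\item Completeness already fails at the tail check. Let $P$ agree with $Q$ except that mass $0.45\varepsilon$ is moved from $\Sigma^{\le\theta}$ onto a single string of length $>\theta$. Then $\|P-Q\|_1=0.9\varepsilon<\varepsilon$, yet $P_{>\theta}\ge 0.45\varepsilon>\varepsilon/3$. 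By the very concentration the tail check relies on, $\eta>\varepsilon/3$, and the algorithm rejects with high probability. This is why your bound $P_{>\theta}\le Q_{>\theta}+\|P-Q\|_1$ could not show that the check passes.
\item As you noticed, accepting below $\varepsilon$ while rejecting above $\varepsilon$ with no gap is impossible for any finite-sample tester once $\delta<1/2$.
\end{itemize}

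What is provable is a gapped statement: \textsc{Accept} if $\|P-Q\|_1\le\varepsilon_1$, \textsc{Reject} if $\|P-Q\|_1\ge\varepsilon_2$. The cleanest proof also removes the conditioning issue you raised but left open: the finite tester receives samples from $P(\cdot\mid\Sigma^{\le\theta})$ yet compares them with the sub-probability vector $Q_{\le\theta}$. Instead of discarding long samples, map every string of length $>\theta$ to one new symbol $\bot$. For the induced distributions $\bar P,\bar Q$ on $k+1$ points, the inequality $\sum_{|x|>\theta}|P(x)-Q(x)|\le P_{>\theta}+Q_{>\theta}\le|P_{>\theta}-Q_{>\theta}|+2Q_{>\theta}$ gives $\|P-Q\|_1-2Q_{>\theta}\le\|\bar P-\bar Q\|_1\le\|P-Q\|_1$. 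So once $\theta$ satisfies $2Q_{>\theta}<\varepsilon_2-\varepsilon_1$, a single tolerant tester for $\bar P$ against $\bar Q$ with thresholds $(\varepsilon_1,\varepsilon_2-2Q_{>\theta})$ suffices, with no separate tail check. A gap condition of this kind is unavoidable for any procedure that ignores which long strings occur. Two distributions agreeing on $\Sigma^{\le\theta}$ with equal tail mass can have distances to $Q$ differing by nearly $2Q_{>\theta}$.

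Finally, the displayed $N$ is not established either:
\begin{itemize}
\item For thresholds $(\varepsilon/3,\varepsilon)$, the $k/\log k$ term of the tolerant-testing rate (cf.\ Valiant--Valiant and Canonne et al.) carries an extra factor of order $\varepsilon^{-2}$.
\item Standard amplification multiplies the whole bound by $\log(1/\delta)$, not just one term.
\item Your Hoeffding requirement $\varepsilon^{-2}\log(1/\delta)$ is not dominated by $N$ when $k/\log k\ll\varepsilon^{-2}$ and $\log(1/\delta)\gg\sqrt{k}$.
\item The $\Theta$ asserts a matching lower bound that neither you nor the paper argues.
\end{itemize}
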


\begin{proof}[Sketch]
The proof hinges on the decomposition of the \(\ell_1\) error and the properties of the finite-domain tolerant tester.
\begin{enumerate}
    \item \textbf{Truncation Guarantee:} By construction, $Q_{> \theta} < \varepsilon/3$.
    \item \textbf{Tail Mass Check:} If $P_{> \theta} > \varepsilon/3$, then the empirical estimate $\eta$ will exceed $\varepsilon/3$ with high probability given the sample size $N$, causing rejection. This handles soundness cases where $P$ diverges from $Q$ in the tail.
    \item \textbf{Head Analysis:} If the algorithm proceeds to the finite-domain tester, we know $P_{> \theta} \leq \varepsilon/3 + o(1)$ with high probability. The total \(\ell_1\) distance can be bounded as:
    \[
    \|P - Q\|_1 \leq \underbrace{\|P_{\leq \theta} - Q_{\leq \theta}\|_1}_{\text{Head}} + \underbrace{P_{> \theta} + Q_{> \theta}}_{\text{Tail}}.
    \]
    \item \textbf{Completeness ($\|P - Q\|_1 < \varepsilon$):} The tail term is less than $\varepsilon/3 + \varepsilon/3 = 2\varepsilon/3$. Thus, $\|P_{\leq \theta} - Q_{\leq \theta}\|_1 < \varepsilon/3$. The tolerant tester, configured with a lower threshold of $\varepsilon/3$, will correctly accept.
    \item \textbf{Soundness ($\|P - Q\|_1 > \varepsilon$):} Since the tail term is at most $2\varepsilon/3$, the head distance must satisfy $\|P_{\leq \theta} - Q_{\leq \theta}\|_1 > \varepsilon/3$. The tolerant tester, with an upper threshold of $\varepsilon$, will correctly reject.
\end{enumerate}
The sample complexity is dominated by the finite-domain tolerant identity tester \cite{Canonne2022Price}, which requires $O(\sqrt{k}/\varepsilon^2 + k/\log k)$ samples to distinguish between distances of $\varepsilon/3$ and $\varepsilon$ on a domain of size $k$. The multiplicative $\log(1/\delta)$ factor comes from standard confidence amplification. The number of samples is sufficient to ensure the empirical tail mass check is also correct with probability $1-\delta$.
\end{proof}

\paragraph{Remarks.}
The algorithm provides a computationally efficient reduction from identity testing on infinite sequences to the finite domain. The sample complexity, while exponential in the truncation length \(\theta\), is polynomial in the effective support size \(k = |\Sigma^{\leq \theta}|\). For rational stochastic languages with rapidly decaying tails (a common property), \(\theta\) is manageable, leading to a practical verification procedure for models that are hypothesized to be close to such languages.
\section{Conclusion and Future Work}
\label{sec:conclusion}

We have established the first formal reduction from decoder-only transformers to weighted automata, creating a bridge between modern neural sequence models and classical automata theory. Our main technical contributions are: (1) a constructive proof that transformer self-attention can be computed by structured RNNs, (2) a spectral learning algorithm that converts such RNNs to weighted automata with provable approximation guarantees, and (3) an identity testing procedure for rational stochastic languages with sample complexity $O(\sqrt{k}/\varepsilon^2 + k \log k)$ where $k = |\Sigma^{\leq \theta}|$.

\paragraph{The Computational Barrier.} Our analysis reveals a fundamental bottleneck: spectral learning requires Hankel matrices of size $ O(|\Sigma|^{2\theta})$, leading to exponential complexity in the truncation parameter $\theta$. This barrier underscores why automata-based verification of practical language models has remained elusive.

Importantly, this limitation is not merely an artifact of our specific construction. The exponential dependence on $\theta$ reflects a deeper tension between the expressivity required to approximate transformer behaviour and the computational tractability of automata-theoretic methods. Any approach that seeks to capture the distributional properties of transformers over realistic vocabularies must confront this fundamental trade-off.

\paragraph{Theoretical Implications.} Our results locate transformers within the established hierarchy of computational models. The reduction provides the first formal characterization of transformer expressivity in terms of weighted automata. This connection enables transfer of results: closure properties, decidability results, and algorithmic techniques from automata theory now apply to transformer analysis. Furthermore, our identity testing algorithm provides a principled statistical framework for model comparison and verification.

\paragraph{Algorithmic Challenges and Directions.} The exponential bottleneck suggests several concrete research directions:

\begin{enumerate}
    \item Develop low-rank approximations or sketching techniques for learning that preserve essential spectral properties while reducing computational cost.
    \item Design randomized algorithms that estimate automaton parameters without constructing full Hankel matrices, potentially achieving polynomial dependence on $\theta$.
    \item  Exploit specific properties of transformer architectures (attention patterns, positional encodings, etc.) to reduce the effective dimension of the learning problem, and make it tractable. 
\end{enumerate}

We acknowledge that these challenges have not allowed us to implement and evaluate our approach on real-world transformers, and it remains a direction of future work. Beyond this, Several fundamental questions remain: Can the exponential dependence on $\theta$ be avoided while preserving approximation quality? What is the minimal class of weighted automata needed to approximate practical transformers? How do architectural choices (depth, attention patterns, normalization) affect the complexity of automata-based approximation?

In conclusion, our framework has provided new evaluation methodologies for language models. Rather than relying solely on empirical benchmarks, our identity testing approach provides statistical certificates of distributional similarity to reference models—a step toward more rigorous model validation. Through this work, we provide both a foundational connection between transformers and automata theory and a precise characterization of the computational challenges that must be overcome for practical verification. The path forward requires algorithmic innovation at the intersection of spectral methods, randomized algorithms, and formal language theory.

\bibliographystyle{splncs04}
\bibliography{references}

\end{document}